\documentclass[conference]{IEEEtran}
\IEEEoverridecommandlockouts
\usepackage{cite}
\usepackage{amsmath,amssymb,amsfonts}
\usepackage{mathtools}

\usepackage{cuted}
\setlength\stripsep{3pt plus 1pt minus 1pt}

\usepackage{graphicx}
\usepackage{textcomp}
\usepackage{xcolor}
\usepackage{cuted}
\setlength\stripsep{3pt plus 1pt minus 1pt}

\hyphenation{op-tical net-works semi-conduc-tor}
\usepackage{amsmath, amssymb, amsthm,algorithm}
\usepackage[latin1]{inputenc}


\usepackage{latexsym}
\usepackage{graphics,epsfig}
\usepackage{amsfonts}
\usepackage{booktabs}
\usepackage{color}
\usepackage{multirow}
\usepackage[justification=centering]{caption}
\usepackage{dsfont}
\captionsetup{size=footnotesize,
    skip=5pt, position = bottom}
    
\usepackage{graphicx}
\usepackage{adjustbox}
\usepackage{kantlipsum}
\usepackage[caption=false,font=footnotesize]{subfig}
\usepackage{cases}
\usepackage{url}
\usepackage{tabu}
\usepackage{makecell}
\usepackage{soul}

\usepackage{array}
\usepackage{algpseudocode}
\usepackage{mathtools,lipsum,cuted}

\usepackage[justification=centering]{caption}

\usepackage[caption=false,font=footnotesize]{subfig}
\usepackage[T1]{fontenc}
\usepackage{mwe}
\usepackage{subfig}

\newlength{\tempheight}
\newlength{\tempwidth}

\newcommand{\rowname}[1]
{\rotatebox{90}{\makebox[\tempheight][c]{#1}}}

\newcommand{\columnname}[1]
{\makebox[\tempwidth][c]{#1}}

\makeatletter
\def\BState{\State\hskip-\ALG@thistlm}
\makeatother

\newcounter{example}[section]

\usepackage[english]{babel}
\usepackage{verbatim}

\usepackage[english]{babel}

\theoremstyle{plain}

 \newtheorem{prop}{Proposition}

\theoremstyle{remark}

\def\BibTeX{{\rm B\kern-.05em{\sc i\kern-.025em b}\kern-.08em
    T\kern-.1667em\lower.7ex\hbox{E}\kern-.125emX}}
\begin{document}

\voffset=0.05in
\textheight=9.28in

\title{Outage Probability Analysis of  Wireless Paths with Faulty Reconfigurable Intelligent Surfaces}

\author{\IEEEauthorblockN{Mounir Bensalem and Admela Jukan}
\IEEEauthorblockA{Technische Universit\"at Braunschweig, Germany;
\{mounir.bensalem,  a.jukan\}@tu-bs.de}

}

\maketitle

\begin{abstract}
 We consider a next generation wireless network incorporating a base station a set of typically low-cost and faulty Reconfigurable Intelligent Surfaces (RISs). The base station needs to select the path including the RIS to provide the maximum signal-to-noise ratio (SNR) to the user. We study the effect of the number of elements, distance and RIS hardware failure  on the path outage probability, and based on the known signal propagation model at high frequencies, derive the closed-form expression for the said probability of outage. Numerical results show the path outage likelihood as function of the probability of hardware failure of RIS elements, the number of elements, and the distance between mobile users and the RIS.
\end{abstract}

\section{Introduction}
Reconfigurable intelligent surfaces (RISs) are becoming an integral system component in millimeter-wave (mmW) and Terahertz (THz) communications, aimed at solving the problem of considerable volatility of transmission due to physical obstacles by reflecting the wireless links around the obstacles\cite{boulogeorgos2021pathloss, wu2021intelligent}.  In addition, RIS partitioning into sub-surfaces, whereby each sub-surface consists of a number of adjacent reflecting elements, allows the creation of multiple paths for multi-user and reliability scenarios \cite{mao2022element, cai2023ris}. On the other hand, the susceptibility of individual RIS elements, referred to as unit cells or meta-atoms, to failures is already a well-known concern  \cite{li2021array}. Faulty RIS elements can lead to degradation in the antenna radiation pattern, and in more severe cases, impact the entire functionality of RIS meta-surface elements. To ensure reliable RIS-aided networks in practical scenarios, it is critical to design link and path selection schemes that account for outage due to RIS meta-surface failures. 

In this paper, we provide a novel analysis of the outage probability of wireless paths due to a) RIS hardware failure and b) connection degradation due to obstacles, both stationary and mobile. We analyse and obtain a closed-form of the instantaneous channel state information (CSI) model of an end-to-end communication system including base station as transmitter, RIS (as an intermediate node) and user as receiver.  In our approach, the base station does not need to consider an instant CSI, and the "best" RIS can be selected based on outdated CSI. We formulate the SNR model using the instantaneous CSI and analyse the partial best path selection. 
We derive a closed-form expression of the outage probability for an arbitrary path in terms of the modified Bessel function of the second kind.  We also study the effect of RIS hardware failure. Thus, the closed-form expression is dependent on the failure probability of RIS elements, the existence of physical obstacle, devices characteristics, and the length of the individual links.
We show that the analysis can be used to quantify the impact of the  blockages and  penetration losses as well as the RIS hardware failure on the outage probability for arbitrary paths. We show that at a signal-to-noise-ratio (SNR) approaching zero,  the presence of obstacles and the failure probability of RIS elements majorly affect the outage probability. A more reliable connection can be obtained at a large enough RIS device vs. a communication distance, in consideration of the RIS with strongly correlated elements and their likelihood of failure. This is due to the outage probability having a comparably lower values given a certain minimum number of RIS meta-surface elements. 
\par The rest of the paper is organized as follows. Section II presents the related work. Section III provides the system model and Section IV the outage probability analysis. Section V discusses numerical results. Section VI concludes the paper.

\section{Related Work}

Several studies focused on the outage probability in RIS-aided communications, including \cite{yang2020outage, wang2021outage, lu2021outage, jung2019reliability}. 
In \cite{yang2020outage}, the authors has analysed the outage probability of a RIS-aided system, where the RIS with highest SNR value is selected. Their results showed that the number of RISs and the number of reflecting elements are crutial for  the capacity scaling law of multiple RIS-aided networks. In our work, a partial path selection approach was adopted, meaning that the paths with the best SNR value between the RIS and the user are selected to support the communication.
In \cite{jung2019reliability},  a large intelligent surfaces (LISs), which can be placed in structures like walls,   has been studied as a solution for reliable communication. The outage probability of LIS was characterized, where the closed-form was derived. The authors investigated the distribution of uplink sum-rate asymptotically in terms of number of LIS elements. A similar channel modeling is  used in our work, and additionally the failure of elements and links will be integrated into the closed-form of outage probability.
From a channel modelling perspective, the fading channel is time varying, which make it important to adopt the scenario of an outdated CSI, as the instant CSI is hard to obtain in real communication systems. Recent work  \cite{mensi2022performance} adopted imperfect with outdated channel estimates to  analyse the outage probability of the best RIS selection in  a RIS-assisted network for vehicular communications, while dealing with the problem of RIS selection in a single hop system.  
The closed form of outage probability was derived in several specific communication scenario including relays \cite{vicario2009opportunistic, michalopoulos2012amplify, kwon2017relay}, and recently RISs \cite{mensi2022performance, guo2020outage}. Similar CSI model of \cite{mensi2022performance} was adopted in our work, where we assumed equal amplitude between elements of same RIS  to obtain a closed form expression of SNR and thus of the outage. The consideration of faulty elements for path selection was first  introduced in our previous work \cite{bensalem2023towards}, and to the best of our knowledge, this work is the first to consider the failure of RIS elements and connection degradation to analyse the outage probability of the RIS-aided path.  Without loss of generality the obtained outage probability closed form can be applied to both single hop and multi-hop system.



%
%
%
%
%


\section{System Model}
\addtolength{\topmargin}{-0.02in}


%

 We assume a communication scenario, consisting of a single antenna transmitter (TX) representing a base station (B), N different RISs, and  a single antenna receiver (RX), representing a mobile user (U), as illustrated in Fig. \ref{fig:arch}.a. We assume that an element of a RIS can fail with a probability $p$ and a connection can degrade due the presence of an obstacle, as shown in Fig.\ref{fig:arch}.b.  It is assumed that each RIS is partitioned into $J$ sub-surface block, where each block is operating as an independed RIS located in the same position. This assumption allows the RIS to be used to serve mutiple users by configuring the phase shift of each RIS block in order to reflect the signal to a certain location, denoting by $R_{k,j}$ the j$^\text{th}$ block of the k$^\text{th}$ RIS. A RIS is assumed to have $M$ sub-wavelength-sized reflecting elements, where each element has a square shape of size $L\times L$. As the RIS has $J$ equal blocks, we denote by  $M'=\frac{M}{J}$ the number of elements in each block. When $J$ is set to $1$, the RIS is operating as a signle block serving a single user at a time.  The direct link between $B$ and $U$  is considered not available due to obstacles, and the line-of-sight LoS links consist of $U$-to-$R_{k,j}$ and $R_{k,j}$-to-$B$. Let the channel coefficients of the LoS links  from user $U$ by the base station $B$  of RIS block $R_{k,j}$, be  denoted as a vector $\textbf{h}_{R_{k, j}B}=[h_{R_{k, j}B}^{1},..h_{R_{k, j}B}^{e}, .. h_{R_{k, j}B}^{M'}] $, where $h_{R_{k, j}B}^{e}$ is the channel coefficient related to the element $e$, and $\textbf{h}_{UR_{k, j}}=[h_{UR_{k, j}}^{1},...,h_{UR_{k, j}}^{e},...,h_{UR_{k, j}}^{M'}] $, where $k\in \lbrace 1,..., N \rbrace$,  $j \in  \lbrace 1,..., J \rbrace$, and  $e \in  \lbrace 1,..., M' \rbrace$.

The received signal at the user $U$ from base station $B$ through the RIS block $R_{k,j}$ is expressed as:
\begin{multline}
y_{U(R_{k,j})} =  \textbf{h}_{UR_{k, j}}^{T} \textbf{C}_{(k,j)}^{\frac{1}{2}} \Psi_{(k,j)} \textbf{C}_{(k,j)}^{\frac{1}{2}}  \textbf{h}_{R_{k, j}B} x + N_{U(R_{k,j})}\\
= \left( \sum_{l=1}^{M'}\sum_{s=1}^{M'}\sum_{m=1}^{M'}\sqrt{a_{l,s(k,j)}a_{l,m(k,j)}} h_{UR_{k, j}}^{s} h_{R_{k, j}B}^{m} e^{i\Psi_{m(k,j)}}  \right )\\ \times x  + N_{U(R_{k,j})}
\end{multline}
where $x$ is the transmitted signal  received by the user U and sent from the base station B with transmit power  P, $N_{U(R_{k,j})}$ is the AWGN power at user U.  We denote by  $ \textbf{C}_{(k,j)} = \{a_{l,s(k,j)}\}_{l,s\in [1,M']^2}$ the correlation coefficients matrix, and  $a_{l,s(k,j)}$ is the correlation coefficient  between the $l^{th}$ and $s^{th}$ elements of $j^{th}$ block of the $k^{th}$ RIS  $R_{k,j}$, such that $a_{l,s(k,j)} = 1$ if $l=s$, and $a_{l,s(k,j)}\in [0,1], \forall i,j \in [\![1,n]\!]^2$. $\Psi_{(k,j)}$ is the phase matrix induced by $R_{k,j}$, and  $\Psi_{m(k,j)}$ describes the $m^{th}$ adjustable phase induced by the  $j^{th}$ block of the $k^{th}$ RIS  $R_{k,j}$. 

  \begin{figure}[H]
 \centering 
   \includegraphics[scale=0.29]{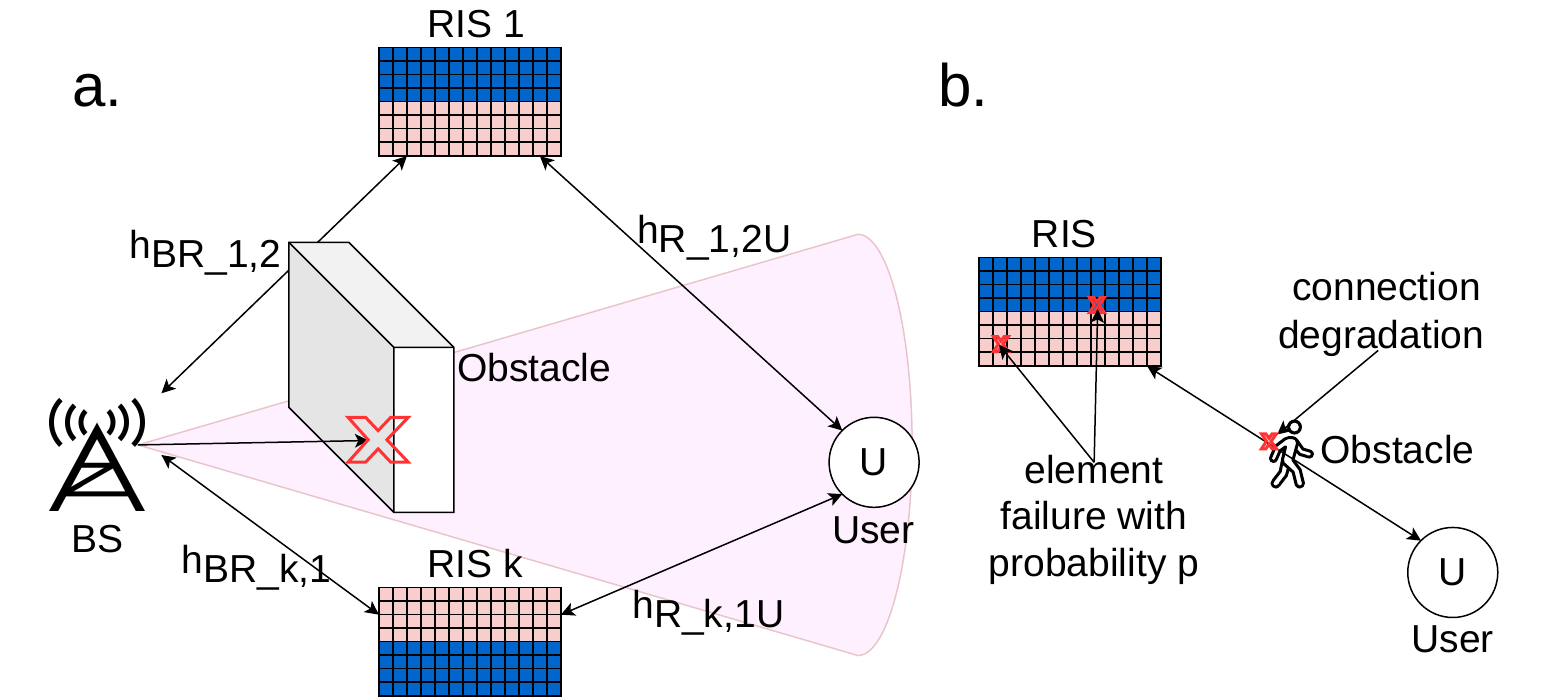}
 \caption{The reference  RIS-assited wireless network system: a. system architecture, b. hardware failure and connection degradation case.}
\label{fig:arch}
\end{figure}
\subsection{CSI Model}

The channel coefficients of the $U$-to-$R_{k,j}$ and $R_{k,j}$-to-$B$ links through the element $e$ are defined as follows:
 \begin{equation}\label{eq:channelURIS}
 \begin{split}
 h_{UR_{k, j}}^{e} = &g_{UR_{k, j}} e^{-i\theta_{k,j}^{e}}\sqrt{(d_{UR_{k, j}})^{-\delta_{k,j}(f)}},\\ 
 h_{R_{k, j}B}^{e}= & g_{R_{k, j}B} e^{-i\Phi_{k,j}^{e}}\sqrt{(d_{R_{k, j}B})^{-\delta_{k,j}(f)}}
 \end{split}
 \end{equation}
where $g_{UR_{k, j}}$ and $g_{R_{k, j}B}$ represents, respectively, the  gains for the $U$-to-$R_{k,j}$ and $R_{k,j}$-to-$B$ links, assuming that all elements of the block has the same gain,  $e^{-i\theta_{k,j}^{e}}$ and $ e^{-i\Phi_{k,j}^{e}}$  represents the normalized  LoS channel states,  with distances $d_{UR_{k, j}}$ and $d_{R_{k, j}B}$  and the  channel phases $\theta_{k,j}^{e}$ and $\Phi_{k,j}^{e}$ for the $U$-to-$R_{k,j}$ and $R_{k,j}$-to-$B$ links through the element $e$, $\delta_{k,j}(f)$ is the path loss exponent for the operating frequency $f$ of the signal.\\
We assume that the distance between RIS blocks is much lower that the distance between the base station $B$ and RISs and between the user $U$ and  RISs, thus the differene of power gain between the blocks of the same RIS is negligeable and can be considered equal.

The channel coefficient at the user $U$ through RIS block $R_{k,j} $ can be assumed as, similar to \cite{mensi2022performance}, but assuming that the gain is similar at all elements:

\begin{multline}\label{eq:my_channel}
h_{U(R_{k, j})} =   g_{UR_{k, j}} g_{R_{k, j}B} \\ \times  \left( \sum_{l=1}^{M'}\sum_{s=1}^{M'}\sum_{m=1}^{M'}   \sqrt{a_{l,s(k,j)}a_{l,m(k,j)}(d_{UR_{k, j}}d_{R_{k, j}B})^{-\delta_{k,j}}} \right. \\ \times \left. e^{i(\Psi_{m(k,j)}-\theta_{k,j}^{s}-\Phi_{k,j}^{m})}  \right ) 
\end{multline}
%

\subsection{SNR model}
The instantaneous SNR between the  user $U$ and the block $j$ of RIS $k$ is given by:
\begin{equation}
\iota_{UR_{k, j}}= \frac{P \nu |h_{UR_{k, j}}|^2}{N_0 }
\end{equation}
where the transmitter sends with source power $P$ and $N_0$ denotes the additive white Gaussian noise (AWGN) power, $\nu$ is a coefficient related to the impact of the existence of an obstacle, and $h_{UR_{k, j}}$ is defined by eq. (\ref{eq:channelURIS}).

Considering the selected path from the base station B to the user U, and using the channel definition in eq. (\ref{eq:my_channel}),  the instantaneous SNR corresponding to  the block $j$ of RIS $k$ is given as:
\begin{multline}\label{eq:finalsnr}
\overline{\iota}_{U(R_{k, j})}=  \frac{P \nu | h_{U(R_{k, j})}|^2}{N_0} \\ = \Omega\cdot  |g_{UR_{k, j}}|^2 |g_{R_{k, j}B}|^2  |\sum_{l=1}^{M'}\sum_{s=1}^{M'}\sum_{m=1}^{M'}\zeta_{l,s,m, U}^{k,j}   |^2
\end{multline}
where $\Omega$ and $\zeta_{l,s,m, U}^{k,j}$ are defined as follows:
\begin{multline}
 \Omega= \frac{P\nu}{N_0} \;\;\;\; \text{and} \\ \zeta_{l,s,m, U}^{k,j} =  \sqrt{a_{l,s(k,j)}a_{l,m(k,j)}(d_{UR_{k, j}}d_{R_{k, j}B})^{-\delta_{k,j}}} \\ \times e^{i(\Psi_{m(k)}-\theta_{k,j}^{s}-\Phi_{k,j}^{m})}  
 \end{multline}
After simplifying the term with square of summations,  the previous form can be re-written as follows:
\begin{equation}\label{eq:finalsnr2}\begin{split}
\overline{\iota}_{U(R_{k, j})}= & \Omega \cdot    |g_{UR_{k, j}} |^2 |g_{R_{k, j}B}|^2  \left [ \sum_{l=1}^{M'}\sum_{s=1}^{M'}\sum_{m=1}^{M'}|\zeta_{l,s,m, U}^{k,j}|^2 \right. \\ & \left. +  2 \sum_{ (l,s,m,l',s',m')\in \Xi}^{M'} \eta_{l,s,m,l',s',m', U}^{k,j}  \right ]
\end{split}
\end{equation}
where $
|\zeta_{l,s,m, U}^{k,j}|^2 = a_{l,s(k,j)}a_{l,m(k,j)}(d_{UR_{k, j}}d_{R_{k, j}B})^{-\delta_{k,j}}, $

\begin{multline}
\eta_{l,s,m,l',s',m', U}^{k,j} =  \zeta_{l,s,m, U}^{k,j} \zeta_{l',s',m', U}^{k,j}\\=  \sqrt{a_{l,s(k,j)}a_{l,m(k,j)}a_{l',s'(k,j)}a_{l',m'(k,j)}} (d_{UR_{k, j}}d_{R_{k, j}B})^{-\delta_{k,j}}\\ \times cos((\Psi_{m(k,j)} -\theta_{k,j}^{s} -  \Phi_{k,j}^{m}) - ( \Psi_{m'(k,j)}  -\theta_{k,j}^{s'} - \Phi_{k,j}^{m'} )  )
\end{multline}
and  $\sum_{ (l,s,m,l',s',m')\in \Xi}^{M'} = \sum_{l=1}^{M'}\sum_{s=1}^{M'}\sum_{m=1}^{M'} \sum_{l'=1}^{M'}\sum_{s'=1}^{M'}\sum_{m'=1}^{M'}$ where  $\Xi$ follows the conditions given as:
\begin{equation}
\Xi  =  \begin{cases}
       (l \neq l') \cap (s \neq s') \cap (m \neq m')  \\
      m < l'\\
      ( m = l') \cap (l < s ) \cap (m'\leq s') \\
      ( m \leq l') \cap (l \leq s) \cap (m'> s')
    \end{cases}
\end{equation}
\subsection{Element failure model}
We consider the possibility of RIS element failure. Let $1_{k,j}^{e}$ be an indicator of whether an element $e$ fails or not; $1_{k,j}^{e}=1$ if element $e$ in RIS block $R_{k, j}$ does not fail and $0$ otherwise. We denoted by $P[1_{k,j}^{e}=0]=p, \forall e \in [1,...,M']$  the probability that element $e$ of a RIS block fails, $Q$ be the number failed elements. 
\begin{prop} The probability mass function  $P_{Q;M'}(q)$ that $q$ element out of $M'$ fails is given as follows:
\begin{equation}
P(Q=q)=C_{M'}^{q} p^{q}(1-p)^{M'-q}
\end{equation}
\end{prop}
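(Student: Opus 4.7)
The plan is to recognize this as a straightforward application of the binomial distribution and to make the (implicit) independence assumption explicit before counting. First I would fix notation: let $X_e = 1 - 1_{k,j}^{e}$ be the indicator that element $e$ has failed, so $X_e \in \{0,1\}$ with $P(X_e = 1) = p$ for every $e \in \{1,\dots,M'\}$, and $Q = \sum_{e=1}^{M'} X_e$. The essential modeling assumption, which is natural in this setting and consistent with how the paper treats element failures as independent hardware faults, is that the family $\{X_e\}_{e=1}^{M'}$ is mutually independent. I would state this explicitly at the start of the proof.

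Next I would carry out the standard combinatorial counting. For a fixed subset $\mathcal{S} \subseteq \{1,\dots,M'\}$ with $|\mathcal{S}| = q$, the event that exactly the elements in $\mathcal{S}$ fail and the remaining $M' - q$ elements do not fail is
\begin{equation*}
\bigcap_{e \in \mathcal{S}} \{X_e = 1\} \cap \bigcap_{e \notin \mathcal{S}} \{X_e = 0\},
\end{equation*}
and by independence its probability factorizes into $p^{q}(1-p)^{M'-q}$, independent of which particular subset $\mathcal{S}$ is chosen. The event $\{Q = q\}$ is the disjoint union of these sub-events over all such $\mathcal{S}$, and the number of such subsets is $\binom{M'}{q} = C_{M'}^{q}$.

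Finally, summing over these $C_{M'}^{q}$ disjoint events of equal probability yields
\begin{equation*}
P(Q = q) = C_{M'}^{q}\, p^{q}(1-p)^{M'-q},
\end{equation*}
which is the claimed expression. As a sanity check one can verify $\sum_{q=0}^{M'} P(Q = q) = (p + (1-p))^{M'} = 1$ by the binomial theorem. There is no real obstacle here: the only subtle point, and the one I would emphasize in the write-up, is that the formula relies on mutual independence of the per-element failure indicators — correlation among physical failures (for instance, a defective manufacturing batch affecting adjacent meta-atoms) would invalidate the factorization and require a different model.
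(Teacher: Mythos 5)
Your proof is correct and takes essentially the same route as the paper, which likewise treats the per-element failures as independent Bernoulli trials with rate $p$ and identifies $Q$ as binomial $\mathcal{B}(M',p)$. The only difference is one of rigor: the paper simply asserts the binomial pmf by assumption, whereas you make the mutual-independence hypothesis explicit and actually derive the formula by counting the $C_{M'}^{q}$ disjoint failure configurations, which is a worthwhile strengthening but not a different argument.
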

\begin{proof}
The random variable $Q$ is assumed to follow the binomial distribution with parameters $M'\in \mathbb{N}$ and $p\in [0,1]$, $Q\sim \mathcal{B}(M',p)$. The probability of getting exactly $q$ failed element is equivalent to getting $q$ successes in $M'$ independent Bernoulli trials, with the same rate $p$, and can be given by the probability mass function. 
\end{proof}
\subsection{Path selection}
We assume that  the base station $B$ has a perfect knowledge of the RF channels, which allows it to select the RIS providing the highest CSI. The base station periodically receives CSIs feedback from the $N$  RISs  
considering that some RISs could  fail due to an obstacle.

In this paper, we consider that a  RIS  Controller can read the signal received at the RIS with an active element, decide the best RIS block to select and forward the information to the base station $B$ to adjust the arrays accordingly. We assume that only the user to RIS link is considered for the selection, and the selection RIS block is given by:
\begin{equation}
(k^{*},j^{*}) = \arg \max_{k,j} (\overline{\iota}_{UR_{k, j}}(T))
\end{equation}
 Thus, the selected path in a 1 hop scenario, would be $B\rightarrow R_{k^*, j^*}\rightarrow U $.
%

\section{Outage Probability Analysis}
The outage probability of the communication channel is defined as the probability that  a certain target data rate cannot be supported by the system. It can be computed as the  the probability that the overall SNR $\overline{\iota}_{U(R_{k, j})}$ being  lower than a threshold, $\gamma_T= 2^{2r}-1$, where $r$ represents    the target rate.

\subsection{Outage considering connection degradation }
\begin{prop} The outage probability of a RIS-assisted path is given by eq. (\ref{eq:outage2}),   where $K_1(\cdot)$ is the 1$^\text{st}$ order modified Bessel function of the second kind. 

\end{prop}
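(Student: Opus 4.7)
The plan is to recast the outage event $\overline{\iota}_{U(R_{k,j})}<\gamma_T$ as a bound on the product of two squared channel amplitudes, and then evaluate the resulting CDF through the standard integral representation of the modified Bessel function $K_1$.

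First, I would factor eq.~(\ref{eq:finalsnr2}) by extracting the small-scale random gains $|g_{UR_{k,j}}|^2$ and $|g_{R_{k,j}B}|^2$ out of the triple sums, collecting the remaining deterministic quantities---the correlation coefficients $a_{l,s(k,j)}$, the distance factor $(d_{UR_{k,j}}d_{R_{k,j}B})^{-\delta_{k,j}}$ and the residual phase cosines $\cos(\cdot)$ surviving the RIS-controller alignment---into a single deterministic aggregate $\Lambda_{k,j}$. With $\Omega$ absorbing the obstacle-penetration coefficient $\nu$, the outage event reduces to
\[
|g_{UR_{k,j}}|^2\,|g_{R_{k,j}B}|^2\;<\;\frac{\gamma_T}{\Omega\,\Lambda_{k,j}}.
\]

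Next, following the Rayleigh-fading convention used in the channel model adopted from \cite{mensi2022performance}, $X=|g_{UR_{k,j}}|^2$ and $Y=|g_{R_{k,j}B}|^2$ are independent exponential random variables with rates $\lambda_X,\lambda_Y$ tied to the large-scale path loss on each leg. I would compute the CDF of $Z=XY$ by conditioning on $Y$,
\[
F_Z(z)=1-\lambda_Y\int_{0}^{\infty} e^{-\lambda_X z/y}\,e^{-\lambda_Y y}\,dy,
\]
and then invoke the classical identity $\int_{0}^{\infty}e^{-a/t-bt}\,dt=2\sqrt{a/b}\,K_1(2\sqrt{ab})$ with $a=\lambda_X z$ and $b=\lambda_Y$. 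This collapses the integral to
\[
F_Z(z)=1-2\sqrt{\lambda_X\lambda_Y z}\;K_1\!\bigl(2\sqrt{\lambda_X\lambda_Y z}\bigr),
\]
and substituting $z=\gamma_T/(\Omega\Lambda_{k,j})$ yields the claim in eq.~(\ref{eq:outage2}).

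The main obstacle is not the Bessel-function step, which is mechanical once the factorisation is in place, but the clean separation of $\overline{\iota}_{U(R_{k,j})}$ into a double-Rayleigh random component and the deterministic aggregate. Verifying that the triple sums in eq.~(\ref{eq:finalsnr2}) and the RIS-controller phase matching leave the cosine cross-terms independent of the fading amplitudes---and that the correlation matrix $\textbf{C}_{(k,j)}$ contributes only a constant scaling relative to the randomness in $X,Y$---is the step that must be handled with the most care, and it is where the simplifying assumption of equal per-block gain stated after eq.~(\ref{eq:channelURIS}) would be explicitly invoked. Once that separation is justified, the closed form in terms of $K_1(\cdot)$ follows by direct substitution into the double-Rayleigh CDF derived above.
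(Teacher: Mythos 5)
There is a genuine gap: your derivation proves a much simpler statement than the one in eq.~(\ref{eq:outage2}). Your factorisation of the SNR into $|g_{UR_{k,j}}|^2|g_{R_{k,j}B}|^2\Upsilon_{U,k,j}$ and the final Bessel step (the identity $\int_0^\infty e^{-a/t-bt}\,dt=2\sqrt{a/b}\,K_1(2\sqrt{ab})$, which is the paper's use of Gradshteyn--Ryzhik 3.324.1) are both consistent with the paper. But you then treat $X=|g_{UR_{k,j}}|^2$ and $Y=|g_{R_{k,j}B}|^2$ as plain independent exponentials, which yields only the double-Rayleigh product CDF $1-2\sqrt{\lambda_X\lambda_Y z}\,K_1(2\sqrt{\lambda_X\lambda_Y z})$. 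The target expression, by contrast, contains double sums over $s,s'$ from $0$ to $NJ-1$ with binomial coefficients $C^{s}_{NJ-1}$ and the correlation parameters $\rho_1,\rho_2$ --- none of which can emerge from your construction. These come from two ingredients you have omitted entirely: (i) the link in question is the \emph{selected best} of $NJ$ blocks, chosen on \emph{outdated} CSI, so the outdated gain of the selected link has the order-statistics distribution obtained by binomially expanding $[1-e^{-y/\lambda_U}]^{NJ-1}$; and (ii) the instantaneous gain is only partially correlated with the outdated gain, so one must pass through the correlated bivariate exponential joint PDF (with the $I_0$ factor), form the conditional density, and integrate it against the order-statistics PDF using Gradshteyn--Ryzhik 6.643.4 to obtain the effective marginal PDF of the instantaneous gain of the selected link. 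Only after that marginal (and its CDF, plus the symmetric versions with $\rho_2,\lambda_B$ for the other hop) is in hand does the final product-of-gains integral produce eq.~(\ref{eq:outage2}).

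In short, what you have derived is the $NJ=1$, perfect-CSI special case. To repair the argument you would need to replace your ``independent exponential'' assumption on $X$ with the mixture density of the form $\sum_{s}\frac{(-1)^s C^s_{NJ-1}}{s+s\rho_1^2+1}\,e^{-\kappa_s x}$ that results from the selection-plus-outdated-CSI machinery, and likewise for $Y$; the double sum and the $\rho$-dependent coefficients in eq.~(\ref{eq:outage2}) then appear term by term when you carry each exponential component through the $K_1$ integral.
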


\begin{proof}
In our mobile scenario, the outage probability is defined as the cdf  $F_{\overline{\iota}_{U(R_{k, j})}}(\gamma_T)$, given by:
\begin{multline}\label{eq:outage1}
F_{\overline{\iota}_{U(R_{k, j})}}(\gamma_T)
 =  \text{ Pr }(\overline{\iota}_{U(R_{k, j})}< \gamma_T )\\=\text{ Pr }( |g_{UR_{k, j}} |^2 |g_{R_{k, j}B}|^2 < \frac{\gamma_T}{ \Upsilon_{U,k,j} }  )\\=\int_{0}^{\infty} \text{Pr }\left(  |g_{UR_{k, j}} |^2 y < \frac{\gamma_T}{ \Upsilon_{U,k,j} }  \right)  f_{|g_{R_{k, j}B}|^2}(y) dy\\ =\int_{0}^{\infty} F_{|g_{UR_{k, j}} |^2 } \left(  \frac{\gamma_T}{ y \Upsilon_{U,k,j} }  \right)  f_{|g_{R_{k, j}B}|^2}(y) dy
 \end{multline}
 where the parameter $\Upsilon_{U,k,j}$ is defined as:
 \begin{equation}\begin{split}
 \Upsilon_{U,k,j} =& \Omega  \left [ \sum_{l=1}^{M'}\sum_{s=1}^{M'}\sum_{m=1}^{M'}|\zeta_{l,s,m, U}^{k,j}|^2  \right. \\&   \left. +  2 \sum_{ (l,s,m,l',s',m')\in \Xi}^{M'} \eta_{l,s,m,l',s',m', U}^{k,j}  \right ]
 \end{split}
\end{equation}

We  assume that  the outdated CSI $\tilde{h}_{UR_{k, j}}  $ and $\tilde{h}_{R_{k, j}B}  $  are following the Rayleigh distribution \cite{bjornson2020rayleigh}, assuming that all elements of the same RIS block have the same CSI. Assuming that  $ |\tilde{g}_{UR_{k, j}} |^2 \sim \text{Exp(}\frac{1}{\lambda_{U}}\text{)}$ where   $\frac{1}{\lambda_{U}}$ represents the rate parameter i.e. $\lambda_{U}$ represents the average SNR per link. In addition, $ |\tilde{g}_{R_{k, j}B} |^2 \sim \text{Exp(}\frac{1}{\lambda_{B}}\text{)}$ where   $\frac{1}{\lambda_{B}}$ is  the rate parameter. 
 The PDF and the CDF of the outdated CSI corresponding to one  reflecting element from the block $j$ of RIS $k$  of the first link from user U to RIS are given by:
 \begin{equation}\label{eq:pdf1}
 f_{|\tilde{g}_{UR_{k, j}} |^2}(x) = \frac{1}{\lambda_{U}} e^{-\frac{x}{\lambda_{U}}}
 \end{equation}
 \begin{equation}\label{eq:cdf1}
 F_{|\tilde{g}_{UR_{k, j}} |^2}(x) = 1 -  e^{-\frac{x}{\lambda_{U}}}
 \end{equation}
The distributions of the outdate CSI  $|\tilde{g}_{UR_{k, j}} |^2 $ and the instantaneous CSI  $|g_{UR_{k, j}} |^2 $ are correlated exponential distributions \cite{vicario2009opportunistic}, thus the joint pdf is given by:
\begin{equation}\label{eq:jointpdf}
 f_{|g_{UR_{k, j}} |^2, |\tilde{g}_{UR_{k, j}} |^2}(x,y) =\frac{ e^{-\frac{x+y}{(1-\rho_1^{2}) \lambda_{U}}}}{(1-\rho_1^{2})\lambda_{U}^{2}} I_0\left (  \frac{2\sqrt{\lambda_{U}^{2} xy}}{(1-\rho_1^{2}) \lambda_{U}} \right )
 \end{equation}
 where $I_0(·)$ is  the zero-order modified Bessel function of the first kind.
 
 \begin{strip}
 \begin{equation}\label{eq:outage2}
\begin{aligned}
F_{\overline{\iota}_{U(R_{k, j})}}(\gamma_T)=&
  \frac{N^2J^2  (1-\rho_1^{2})  }{\lambda_{B} }   \sum_{s=0}^{NJ -1}  \sum_{s'=0}^{NJ -1}    \frac{(-1)^{s+s'} C^{s}_{NJ-1} C^{s'}_{NJ-1}}{(\lambda_{U}^2 + s+s\rho_1^{2}+1)(s'+s'\rho_2^{2}+1)}   \left[  \frac{(s'+s'\rho_2^{2}+1)(1-\rho_2^{2}) \lambda_{B}}{(\lambda_{B}^2 + s'+s'\rho_2^{2}+1)} \right. \\ &- \sqrt{\frac{4(\lambda_{U}^2 + s+s\rho_1^{2}+1) (s'+s'\rho_2^{2}+1)(1-\rho_2^{2}) \lambda_{B}\gamma_T}{ (\lambda_{B}^2 + s'+s'\rho_2^{2}+1)(s+s\rho_1^{2}+1) (1-\rho_1^{2})\lambda_{U}   \Upsilon_{U,k,j}}} \\&\left.  \times  K_{1}\left( \sqrt{\frac{4(\lambda_{U}^2 + s+s\rho_1^{2}+1)(\lambda_{B}^2 + s'+s'\rho_2^{2}+1)\gamma_T }{ (s+s\rho_1^{2}+1)(1-\rho_1^{2})(s'+s'\rho_2^{2}+1)(1-\rho_2^{2}) \lambda_{B} \lambda_{U}   \Upsilon_{U,k,j} }   } \right)  \right] 
\end{aligned}
    \end{equation}
    \end{strip}

 The pdf of the instantaneous SNR of the link between the user U and the block $j$ of RIS $k$ is given as:
 \begin{equation}\label{eq:pdf11} 
 f_{|g_{UR_{k, j}} |^2}(x) = \int_{0}^{\infty}  f_{|g_{UR_{k, j}} |^2 |\;|\tilde{g}_{UR_{k, j}} |^2}(x|y) f_{|\tilde{g}_{UR_{k, j}} |^2 }(y) dy
 \end{equation}
 The conditional pdfs $ f_{|g_{UR_{k, j}} |^2 |\;|\tilde{g}_{UR_{k, j}} |^2}(.|.) $ are all identical for all $k,j$ due to the fact that $|g_{UR_{k, j}} |^2 $ and  $|\tilde{g}_{UR_{k, j}} |^2$  are i.i.d. RVs. Thus the conditional pdfs can be expressed as follows:
 \begin{equation}\label{eq:condpdf}
  f_{|g_{UR_{k, j}} |^2 |\;|\tilde{g}_{UR_{k, j}} |^2}(x|y)= \frac{ f_{|g_{UR_{k, j}} |^2, |\tilde{g}_{UR_{k, j}}|^2}(x,y)}{f_{|\tilde{g}_{UR_{k, j}}|^2}(y) }
 \end{equation}
 Thus, we substitute  (\ref{eq:pdf1})   and (\ref{eq:jointpdf})   into (\ref{eq:condpdf}) to obtain the following expression:
 \begin{equation}\label{eq:condpdf1}
 f_{|g_{UR_{k, j}} |^2 |\;|\tilde{g}_{UR_{k, j}} |^2 }(x|y)= \frac{ e^{-\frac{x+\rho_1^{2}y}{(1-\rho_1^{2}) \lambda_{U}}}}{(1-\rho_1^{2})\lambda_{U}} I_0\left (  \frac{2\sqrt{\lambda_{U}^{2} xy}}{(1-\rho_1^{2}) \lambda_{U}} \right )
 \end{equation}
 The cdf of outdated CSI $|\tilde{g}_{UR_{k, j}} |^2 $ can be given as follows:
 \begin{multline}
 F_{|\tilde{g}_{UR_{k, j}} |^2 }(x) = \sum_{i=1}^{N}\sum_{s=1}^{J}\text{Pr}(|\tilde{g}_{UR_{k, j}} |^2  \leq x \text{ } \cap  k=i, j=s)\\
 =  NJ \text{  Pr}(|\tilde{g}_{UR_{k, j}} |^2  \leq x \text{ } \cap  k=i, j=s)\\=  NJ\int_{0}^{x}f_{|\tilde{g}_{UR_{k, j}} |^2 }(y) \text{   Pr}(  k=i, j=s| \; |\tilde{g}_{UR_{k, j}} |^2  \leq y )dy 
 \end{multline}
 Given that the best selection considers only the link between the user $U$ and the RIS devices, the link between the base station $B$ and the RISs is not taken into consideration. In addition, we consider all paths have the same distribution and using eq. (\ref{eq:cdf1}), the cdf can be given as:
 \begin{multline}\label{eq:cdf2}
 F_{|\tilde{g}_{UR_{k, j}} |^2 }(x) = NJ\int_{0}^{x}f_{|\tilde{g}_{UR_{k, j}} |^2 }(y) [F_{\tilde{h}_{UR}}(y)]^{NJ-1}dy\\=  NJ\int_{0}^{x}f_{|\tilde{g}_{UR_{k, j}} |^2 }(y) [1 -  e^{-\frac{y}{\lambda_{U}}}]^{NJ-1}dy\\
 =  NJ \sum_{s=0}^{NJ-1}(-1)^{s} C^{s}_{NJ-1}\int_{0}^{x}f_{|\tilde{g}_{UR_{k, j}} |^2 }(y) e^{-\frac{sy}{\lambda_{U}}}dy\\
 =  NJ \sum_{s=0}^{NJ-1} \frac{(-1)^{s} C^{s}_{NJ-1}}{s+1}(1-e^{-\frac{(s+1)x}{\lambda_{U}}})
 \end{multline}
 Hence, we differentiate the obtained cdf $F_{|\tilde{g}_{UR_{k, j}} |^2 }(x) $ to get the outdated CSI pdf given by:
 \begin{equation}\label{eq:pdf2}
  \begin{split}
f_{|\tilde{g}_{UR_{k, j}} |^2 }(x)=& \frac{d }{dx} F_{|\tilde{g}_{UR_{k, j}} |^2 }(x) \\=&   \frac{NJ}{\lambda_{U}}  \sum_{s=0}^{NJ-1} (-1)^{s} C^{s}_{NJ-1} e^{-\frac{(s+1)x}{\lambda_{U}}}
 \end{split}
 \end{equation}
 By replacing eq. (\ref{eq:pdf2}) and eq. (\ref{eq:condpdf1}) into eq. (\ref{eq:pdf11})  we derive the pdf of the instantaneous CSI as:
 
 \begin{multline}\label{eq:pdf11}
 f_{|g_{UR_{k, j}} |^2}(x) =  \int_{0}^{\infty}  f_{|g_{UR_{k, j}} |^2 |\tilde{g}_{UR_{k, j}} |^2 }(x|y) f_{|\tilde{g}_{UR_{k, j}} |^2 }(y) dy\\
 =  \frac{NJ}{(1-\rho_1^{2})\lambda_{U}^2 }  \sum_{s=0}^{NJ-1} (-1)^{s} C^{s}_{NJ-1}\int_{0}^{\infty}   e^{-\frac{x+(s+s\rho_1^{2}+1)y}{(1-\rho_1^{2}) \lambda_{U}}} \\ \times I_0\left (  \frac{2\sqrt{\lambda_{U}^{2} xy}}{(1-\rho_1^{2}) \lambda_{U}} \right ) dy
 \end{multline}
 
 Using the identity in  \cite{gradshteyntables} eq. (6.643.4), where we set $n=0, \nu=0, \alpha=\frac{s+s\rho_1^{2}+1}{(1-\rho_1^{2}) \lambda_{U}}, \beta \frac{\sqrt{x}}{1-\rho_1^{2}}$,  the closed form of the pdf is given as:
  \begin{equation}\label{eq:pdf3}
 f_{|g_{UR_{k, j}} |^2}(x) 
 = \frac{NJ }{\lambda_{U} }  \sum_{s=0}^{NJ-1}   \frac{(-1)^{s} C^{s}_{NJ-1}}{s+s\rho_1^{2}+1}  \times e^{-\frac{(\lambda_{U}^2 + s+s\rho_1^{2}+1)x}{(s+s\rho_1^{2}+1)(1-\rho_1^{2}) \lambda_{U}}} 
 \end{equation}
 Using the closed form of the pdf in (\ref{eq:pdf3}),  The cdf of $|g_{UR_{k, j}} |^2$ is given as:
 \begin{multline}\label{eq:cdf3}
 F_{|g_{UR_{k, j}} |^2}(x) 
 = NJ (1-\rho_1^{2})   \sum_{s=0}^{NJ -1}   \frac{(-1)^{s} C^{s}_{NJ-1}}{\lambda_{U}^2 + s+s\rho_1^{2}+1} \\ \times \left( 1-e^{-\frac{(\lambda_{U}^2 + s+s\rho_1^{2}+1)x}{(s+s\rho_1^{2}+1)(1-\rho_1^{2}) \lambda_{U}}} \right) 
 \end{multline}
 Due to symmetry, the closed form of the CSI  pdf and cdf of  $|g_{UR_{k, j}} |^2$ and $|\tilde{g}_{UR_{k, j}} |^2$ is given by (\ref{eq:pdf3}) and (\ref{eq:cdf3}), respectively, by interchaning $\rho_1$ with $\rho_2$ and $\lambda_U$ with $\lambda_B$.\\
 Hence, we can evaluate the outage probability by  substituting (\ref{eq:cdf3}) and (\ref{eq:pdf3}) into (\ref{eq:outage1}) and using the identity in  \cite{gradshteyntables} eq. (3.324.1), where we set $\gamma=\frac{(\lambda_{B}^2 + s'+s'\rho_2^{2}+1)}{(s'+s'\rho_2^{2}+1)(1-\rho_2^{2}) \lambda_{B}}$ and $ \beta = \frac{4(\lambda_{U}^2 + s+s\rho_1^{2}+1)\gamma_T }{ (s+s\rho_1^{2}+1)(1-\rho_1^{2}) \lambda_{U}   \Upsilon_{U,k,j} }$,  yielding eq. (\ref{eq:outage2}).
 

\end{proof}
\subsection{Outage considering Element failure}
Considering that in practical system, the RIS hardware is vulnerable to hardware failures, we assume that the outage probability can be dependent on the state of the RIS elements. 

 Without loss of generality, we assume that the first $q$ elements fail, which means that we have $M'-q$ operating elements in RIS block $R_{k, j}$ starting from element index $q+1$.\\ 
 Using the theorem of total probability, the outage probability $F_{\overline{\iota}_{U(R_{k, j})}}(\gamma_T)$ can be given by:

\begin{equation}
F_{\overline{\iota}_{U(R_{k, j})}}(\gamma_T)=\sum_{q=0}^{M'} P(Q=q) F_{\overline{\iota}_{U(R_{k, j})}}^{q}(\gamma_T)
\end{equation}
where $F_{\overline{\iota}_{U(R_{k, j})}}^{q}(\gamma_T)$ is the outage probability when $k$ elements of the RIS block $R_{k, j}$ are failed. 
%
\section{Numerical Evaluation}\label{sec:results}
In this section, numerical results are presented to verify and validate the correctness of the obtained closed form expressions and evaluate the impact of the main system parameters on the  performance and reliability. We consider a set of $N=4$ RISs, with $M=30$ elements, and $J=4$ partitions. We consider a source power $P=30$ dB, a noise AWGN power $N_0 = 10$ dB, a correlation coefficient between instantaneous and outdated CSI of user-RIS link and RIS-BS link $\rho_1=\rho_2=0.1$, the distance $d_{UR_{k, j}}=d_{R_{k, j}B}=4$ m, and  the phase shifts are random.

Fig. \ref{fig:snr_out_p} shows the outage of the best RIS as a function of the normalized average SNR of U-RIS and RIS-B, for different value of element failure probability. The results shows  that the outage probability decreases as the SNR increases for all values of failure. Also, it increase with the increase of  failure probability $p$  of each RIS element, which quantifies how a lower number of elements is impacting the transmitted signal.


 \begin{figure}[ht]
\centering
\includegraphics[scale=0.34]{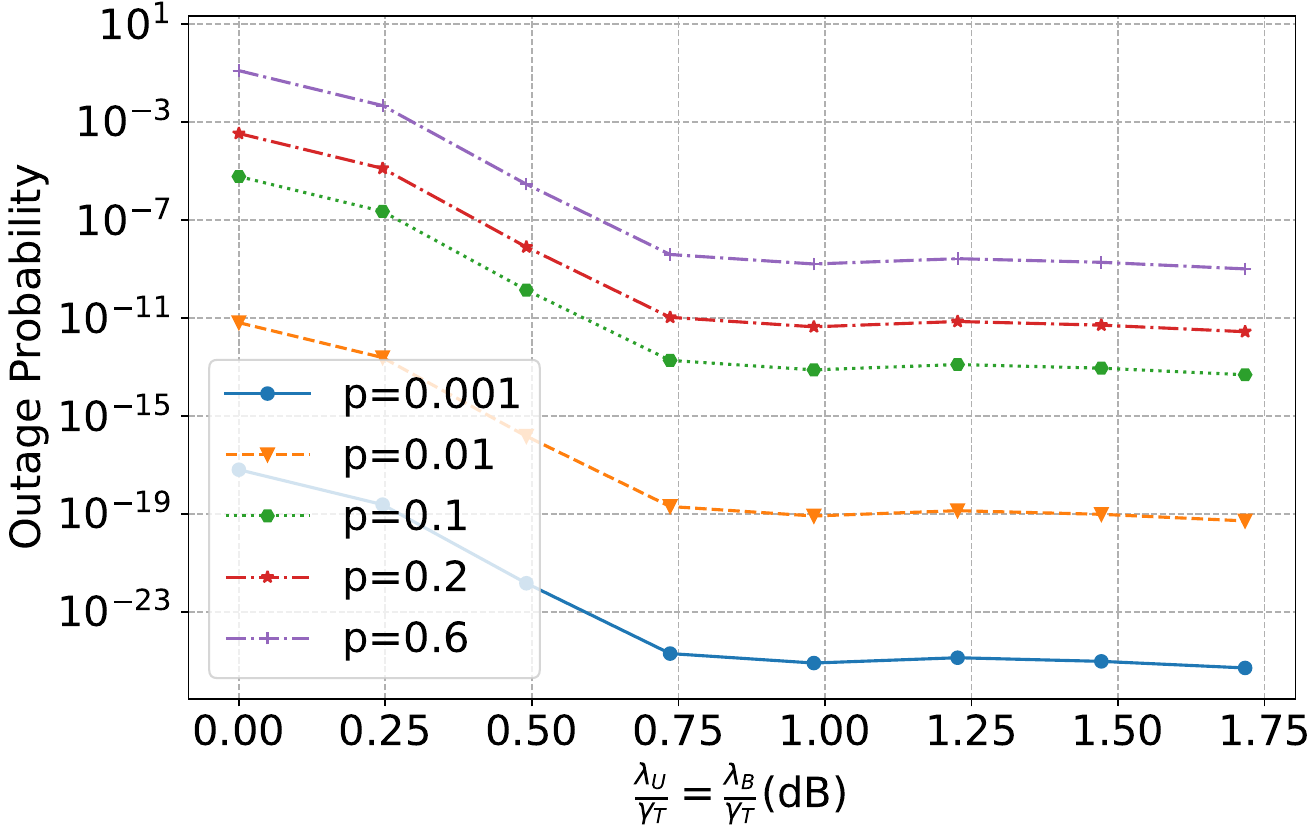}  
 \caption{Outage probability of best RIS vs the normalized average SNR of U-RIS and RIS-B, with different probability of failures, for $d_{UR_{k,j}} = 4$ m.}
\label{fig:snr_out_p}
\vspace{-0.2 cm}
\end{figure}
Fig. \ref{fig:snr_out_pw} shows the outage of the best RIS as a function of the normalized average SNR of U-RIS and RIS-B, for different value of element failure probability and considering the presence or abscence of an obstacle with $\nu=0.1$. It is observed that the presence of obstacle start impacting the outage when the average SNR is slighly higher than $0.5$ dB, and a higher element failure cause a higher outage probability.
  \begin{figure}[ht]
\centering
\includegraphics[scale=0.34]{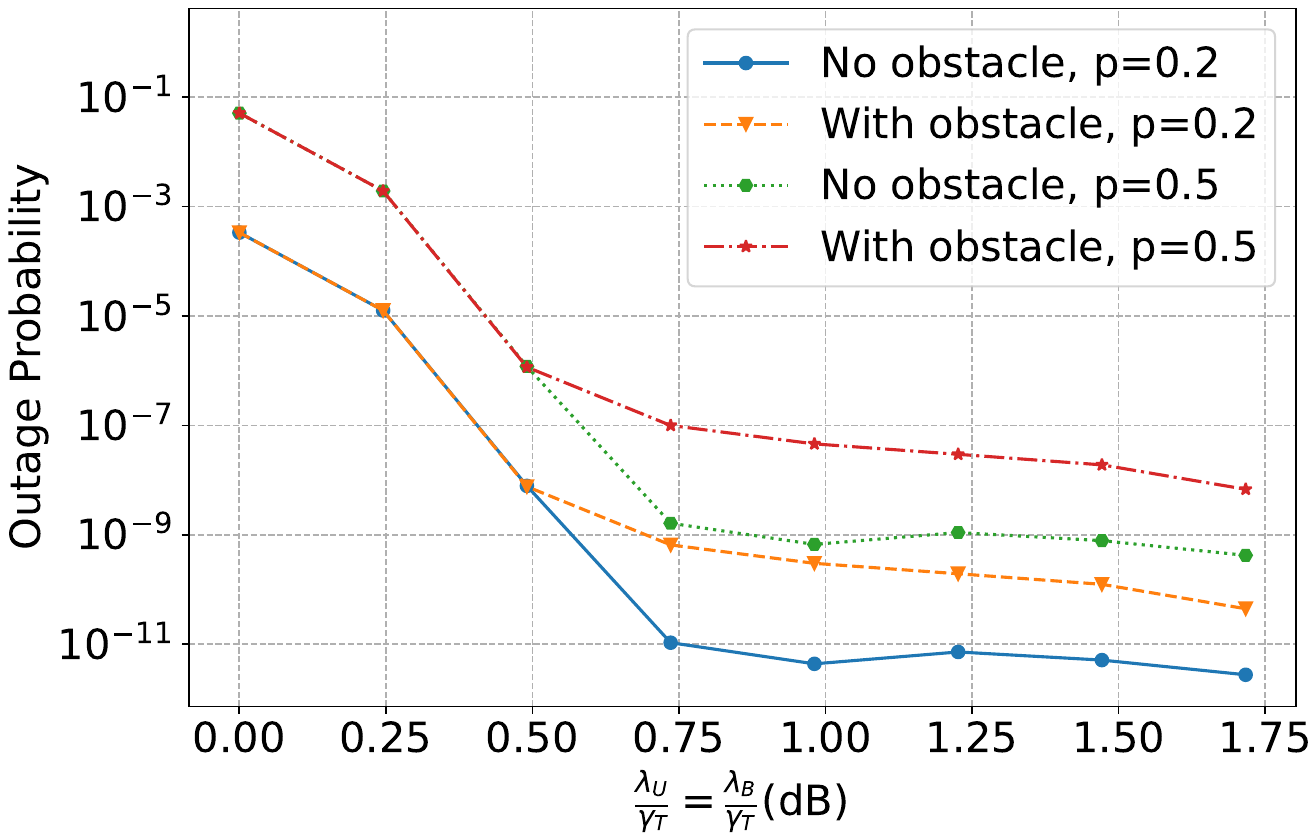}   
 \caption{Outage probability of best RIS vs the normalized average SNR of U-RIS and RIS-B, with different reliability scenarios, for $d_{UR_{k,j}} = 4$ m.}
\label{fig:snr_out_pw}
\vspace{-0.2 cm}
\end{figure}

 Fig. \ref{fig:M_out} shows the outage of the best RIS as a function of number of elements, where $\gamma_T=3$  and $\lambda_U=0.05$, considering different values of $\rho$,  the correlation coefficient between outdated and instantaneous CSI, and correlation between elements. For weakly correlated elements and  $\rho\leq 0.5$, the outage is high for all sizes of RIS, and low when the correlation is strong. For strongly correlation elements, we observe that the outage  decreases at a certain number of elements, $M=50$ for a  4 m distance, meaning that the received power  achieves the target SNR at a certain RIS size. 
\begin{figure}[ht]
\centering
   \includegraphics[scale=0.34]{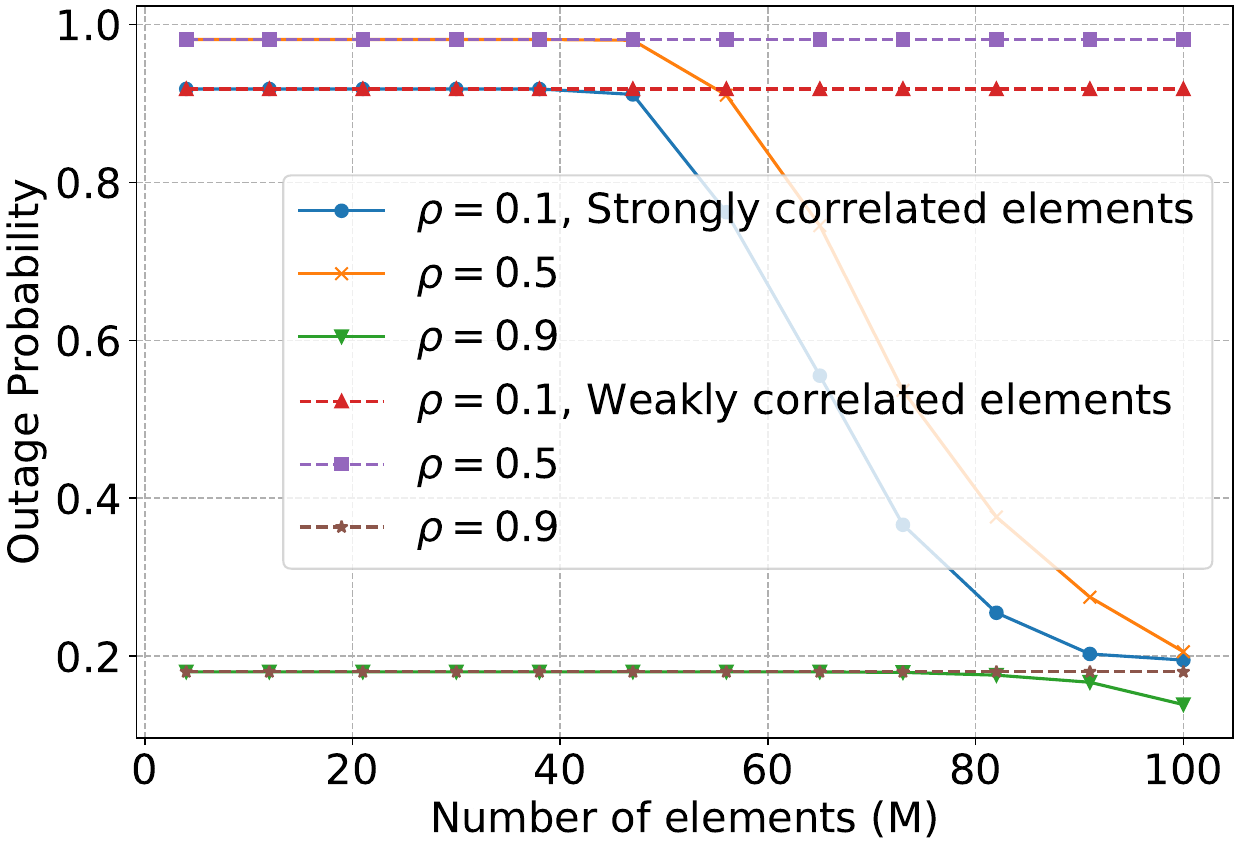}
 \caption{Outage probability of best RIS vs number of elements $M$}
\label{fig:M_out}
\vspace{-0.2 cm}
\end{figure}

Fig. \ref{fig:d_out} shows the outage of the best RIS as a function of user-RIS distance,  with and without obstacle. For  $\rho\leq 0.5$, the outage is high for distance higher than 3 m in presence of obstacle and 5 m without an obstacle. When $\rho=0.9$,  the correlation is strong and the outage is low for all distances.
\begin{figure}[ht]
\centering
   \includegraphics[scale=0.34]{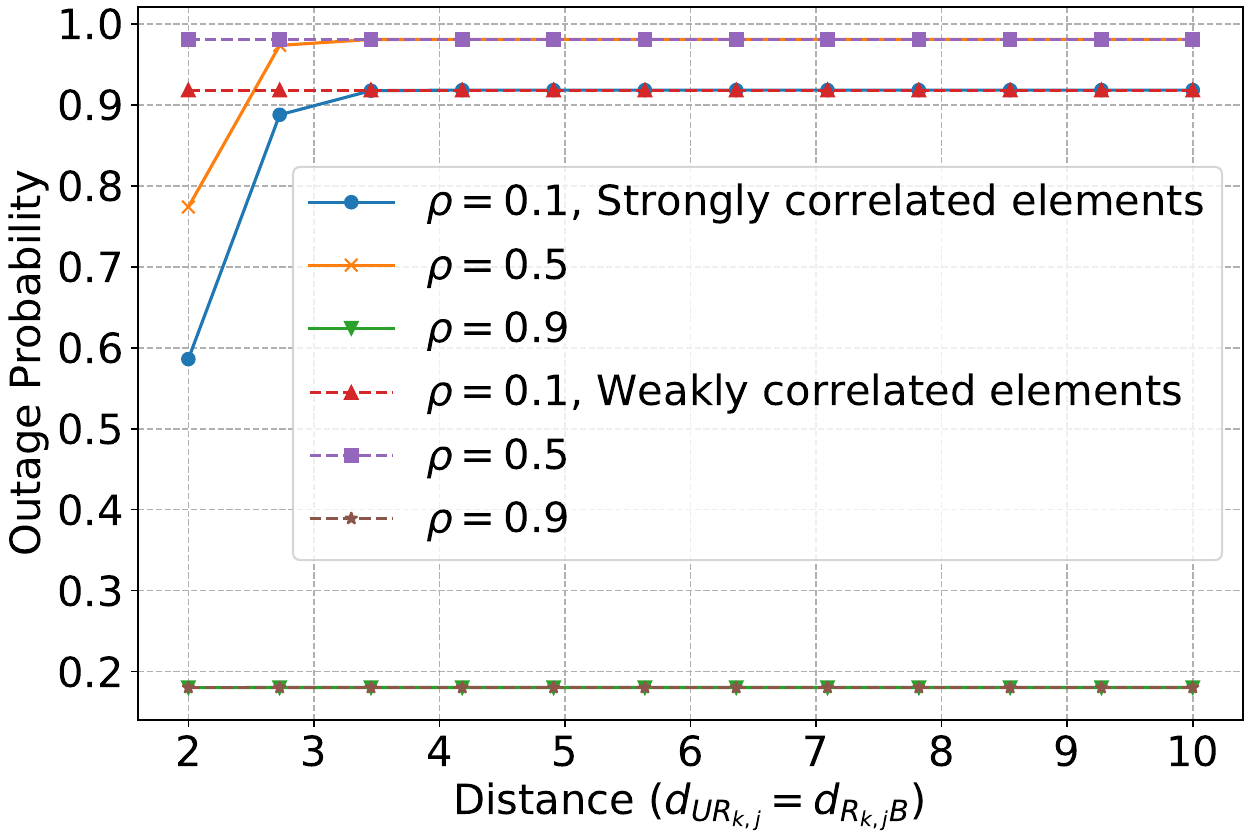}
  \includegraphics[scale=0.34]{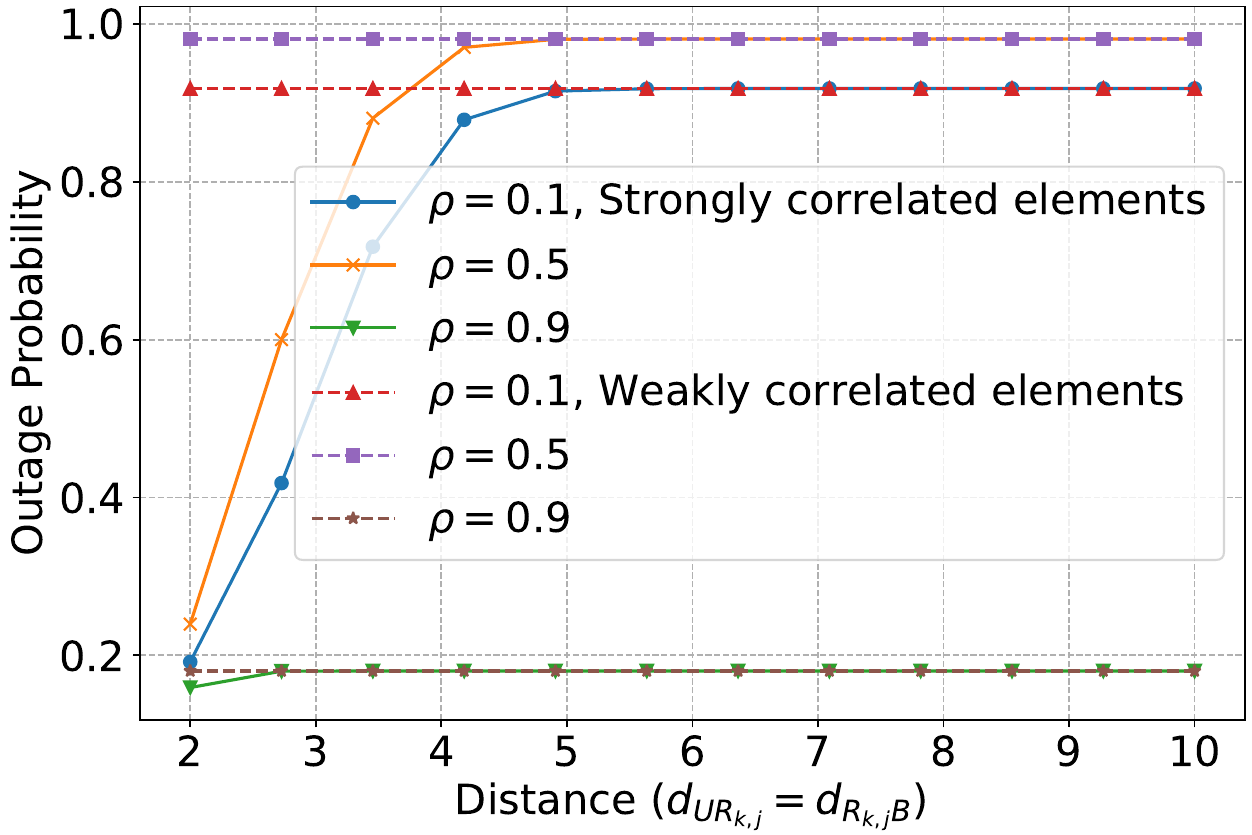}
 \caption{Outage probability of best RIS vs distance $d_{UR_{k,j}}$ with and without  obstacle, $\gamma_T=3$ and $\lambda_U=0.05$}
\label{fig:d_out}
\vspace{-0.2 cm}
\end{figure}

 \section{Conclusion}\label{sec:conclusion}
In this paper, we proposed  a novel analysis of the outage probability of wireless paths, considering  RIS hardware failure and  connection degradation due to obstacles. We analysed and obtained a closed-form of the instantaneous CSI model of a  communication system with a base station, RISs  and user.  We formulate the SNR model using the instantaneous CSI and analyse the partial best path selection, using the "best" RIS  selected based on outdated CSI. 
We derived a closed-form expression of the outage probability for an arbitrary path,  considering the effect of RIS hardware failure and signal degradation due to obstacles. 
The analysis can be used to quantify the impact of the  blockages and  the RIS  failure on the outage probability. The results shows at a SNR approaching zero,  the presence of obstacles and the failure probability of RIS elements highly affect the outage probability. 
a large enough RIS device vs. a communication distance provides a more reliable connection, considering  a RIS with strongly correlated elements and their likelihood of failure. 


\section*{Acknowledgment}
This work was partially supported by the DFG Project Nr. JU2757/12-1,  and by the Federal Ministry of Education and Research of Germany, joint project 6GRIC, 16KISK031.
\vspace{-0.2cm}
\bibliographystyle{IEEEtran}
\bibliography{mybib}

%

\end{document}